\tikzset{>={Latex[width=2mm,length=2mm]}}
\newtheorem{Thm}{Theorem}
\newtheorem{Lem}{Lemma}
\newtheorem{Cor}{Corollary}
\newcommand{\sX}{{\sf X}}
\newcommand{\sY}{{\sf Y}}
\newcommand{\sB}{{\sf B}}
\title{On Reachability Mixed Arborescence Packing}
\author{Tatsuya Matsuoka\thanks{The University of Tokyo, Japan (tatsuya\_matsuoka@mist.i.u-tokyo.ac.jp).}\and Shin-ichi Tanigawa\thanks{The University of Tokyo, Japan (tanigawa@mist.i.u-tokyo.ac.jp).}}
\date{August, 2018}
\begin{document}
\maketitle

\begin{abstract}
As a generalization of the Edmonds arborescence packing theorem, Kamiyama--Katoh--Takizawa~(2009) gave a good characterization of directed graphs that contain arc-disjoint arborescences spanning the set of vertices reachable from each root.
Fortier--Kir\'aly--L\'eonard--Szigeti--Talon~(2018) asked whether the result can be extended to mixed graphs by allowing both directed arcs and undirected edges.
In this paper, we solve this question by developing a polynomial-time algorithm for finding a collection of edge and arc-disjoint arborescences spanning the set of vertices reachable from each root in a given mixed graph.
\end{abstract}

\section{Introduction}
Edmonds' arborescence packing theorem~\cite{Edmonds1973} characterizes directed graphs that contain $k$ arc-disjoint spanning $r$-arborescences in terms of a cut condition.
Here, by an \emph{$r$-arborescence} we mean a subgraph of a given directed graph in which  there is exactly one path from $r$ to each vertex $v$, and a subgraph is said to be \emph{spanning} if its vertex set is equal to the whole vertex set.
The Edmonds theorem~\cite{Edmonds1973} is a directed counterpart of the celebrated Tutte~\cite{Tutte1961} and Nash-Williams~\cite{Nash-Williams1961} tree packing theorem.
Later Frank~\cite{Frank1978} gave a far reaching common generalization of these fundamental theorems by allowing both directed and undirected edges.
A \emph{mixed graph} $F=(V; E, A)$ is a graph consisting of the set $E$ of undirected edges and the set $A$ of directed arcs.
By regarding each undirected edge as a directed arc with both direction, each concept in directed graphs can be naturally extended.
For example, a \emph{mixed path} is a subgraph of a mixed graph in which each undirected edge can be orientated such that the resulting directed graph is a directed path, and a \emph{$r$-mixed arborescence} is a subgraph in which for each vertex $v$ there is exactly one mixed path from $r$ to $v$.

A packing of mixed arborescences is a collection of mutually edge and arc-disjoint mixed arborescences.
Frank~\cite{Frank1978} proved the following.
\begin{Thm}[Frank~\cite{Frank1978}]
\label{ThmMixed}
Let $F=(V; E, A)$ be a mixed graph, $r$ a node in $V$, and $k$ a positive integer.
Then there exists a packing of $k$ spanning $r$-mixed arborescences in $F$ if and only if
\[
e_E(\mathcal{P})+\sum_{i=1}^t\rho_A(V_i)\geq kt
\]
holds for every subpartition $\mathcal{P}=\{ V_1, \ldots , V_t\}$ of $V\setminus \{ r\}$, where $e_E$ denotes the number of edges in $E$ connecting distinct components in $\mathcal{P}$
and $\rho_A(V_i)$ denotes the number of arcs in $A$ entering $V_i$.
\end{Thm}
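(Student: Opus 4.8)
The plan is to prove the two directions separately: necessity by a direct counting argument, and sufficiency by reducing, through an orientation theorem, to Edmonds' arborescence packing theorem.

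\emph{Necessity.} Suppose $T_1,\dots ,T_k$ form a packing of spanning $r$-mixed arborescences, and fix a subpartition $\mathcal P=\{V_1,\dots ,V_t\}$ of $V\setminus\{r\}$. First I would orient every undirected edge of every $T_j$ in the direction of the unique mixed $r$-path through it, so that each $T_j$ becomes a directed spanning $r$-arborescence. Since $r\notin V_i$, the tree $T_j$ contains an edge or arc whose tail lies outside $V_i$ and whose head lies in $V_i$; fix one such object $f_{i,j}$. If $f_{i,j}$ is a directed arc of $A$ then it belongs to $\delta^-_A(V_i)$; if it is the chosen orientation of an undirected edge of $E$, then that edge joins $V_i$ to a different class of the partition $\{V\setminus\bigcup_i V_i\}\cup\mathcal P$ and is therefore counted by $e_E(\mathcal P)$. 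Edge/arc-disjointness of the $T_j$'s, together with the observation that inside a single $T_j$ an edge or arc is oriented in only one way and so has its head in at most one $V_i$, shows that the $kt$ objects $f_{i,j}$ are pairwise distinct; since the sets $\delta^-_A(V_i)$ are pairwise disjoint as well, counting the $f_{i,j}$ gives $e_E(\mathcal P)+\sum_i\rho_A(V_i)\ge kt$.

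\emph{Sufficiency, first step.} I would record the equivalence: $F$ admits a packing of $k$ spanning $r$-mixed arborescences if and only if $E$ has an orientation $\vec E$ for which the digraph $D=(V,A\cup\vec E)$ satisfies $\rho_D(X)\ge k$ for every nonempty $X\subseteq V\setminus\{r\}$. One direction is immediate from Edmonds' theorem: such a $D$ contains $k$ arc-disjoint spanning $r$-arborescences, and forgetting the orientations of the arcs coming from $E$ turns these into $k$ edge/arc-disjoint spanning $r$-mixed arborescences of $F$. Conversely, given a packing, orient each edge of $E$ the way the (at most one) member of the packing using it does — arbitrarily if unused — and Edmonds' theorem applied to the resulting $D$ yields $\rho_D(X)\ge k$ on all nonempty $X\subseteq V\setminus\{r\}$. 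So it remains to construct such an orientation from the subpartition hypothesis.

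\emph{Sufficiency, main step.} Here I would apply Frank's orientation theorem, which for a nonnegative positively intersecting supermodular requirement $h$ states that a graph has an orientation with $\rho_{\vec E}(X)\ge h(X)$ for all $X$ if and only if $e_E(\mathcal P)\ge\sum_i h(V_i)$ holds for every subpartition $\mathcal P=\{V_1,\dots ,V_t\}$. Put $h(X)=\max\{0,\,k-\rho_A(X)\}$ for nonempty $X\subseteq V\setminus\{r\}$ and $h(X)=0$ otherwise; using submodularity of $\rho_A$ one checks that whenever $X\cap Y\neq\emptyset$ and $h(X),h(Y)>0$ we have $h(X)+h(Y)=2k-\rho_A(X)-\rho_A(Y)\le(k-\rho_A(X\cap Y))+(k-\rho_A(X\cup Y))\le h(X\cap Y)+h(X\cup Y)$, so $h$ is of the required type. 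Since $\rho_{A\cup\vec E}=\rho_A+\rho_{\vec E}$ and $\rho_{\vec E}\ge 0$, the orientation sought in the previous step is exactly one with $\rho_{\vec E}(X)\ge h(X)$ for all $X$. Finally I would deduce the orientation theorem's condition from the hypothesis: restricting $\mathcal P$ to the classes $V_i$ with $\rho_A(V_i)<k$ does not increase $e_E$ (discarding a class never creates a new crossing edge) and leaves $\sum_i h(V_i)$ unchanged, and for the restricted subpartition of size $t'$ the hypothesis gives $\sum_i h(V_i)=kt'-\sum_i\rho_A(V_i)\le e_E$; combining yields $e_E(\mathcal P)\ge\sum_i h(V_i)$ for the original $\mathcal P$, so Frank's orientation theorem produces $\vec E$ and Edmonds' theorem finishes the proof. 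The only nonroutine ingredient is Frank's orientation theorem itself; its proof is a feasibility criterion for an associated submodular-flow problem whose heart is an uncrossing argument turning a ``worst'' family of violated sets into a laminar one and hence a subpartition — precisely where submodularity of $\rho_A$ and the positive intersecting supermodularity of $h$ enter. An alternative, more self-contained route would be a direct induction in the spirit of Lov\'asz's proof of Edmonds' theorem, peeling off one spanning $r$-mixed arborescence so that the remaining mixed graph still satisfies the subpartition condition with $k-1$; the delicate point there is the analogue of Lov\'asz's careful rule for choosing the in-edge/arc at each vertex so that no family becomes tight after the deletion.
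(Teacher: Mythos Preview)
Your proposal is correct and follows exactly the route the paper itself indicates: the paper does not give a self-contained proof of Theorem~\ref{ThmMixed} but remarks (just after Theorem~\ref{ThmOrientation}) that it ``follows as a corollary of Theorem~\ref{ThmOrientation} since $g_1-\rho_A$ is intersecting supermodular,'' i.e.\ one reduces to Edmonds' theorem via Frank's orientation theorem applied to the requirement $k-\rho_A$ on nonempty $X\subseteq V\setminus\{r\}$. The only cosmetic difference is that you work with the truncated function $h=\max\{0,k-\rho_A\}$ and the ``positively intersecting supermodular'' formulation, whereas the paper keeps $g_1-\rho_A$ on the intersecting family $\{X:\emptyset\neq X\subseteq V\setminus\{r\}\}$; these are equivalent, and your extra step of discarding the classes with $\rho_A(V_i)\ge k$ is exactly what bridges the two formulations.
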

Frank's proof~\cite{Frank1978} is algorithmic, that is, he also gave  a polynomial-time algorithm for computing a packing in a given mixed graph.

In this paper we consider extending Frank's theorem~\cite{Frank1978} to packing {\em reachability arborescences}.
One serious limitation in the Edmonds theorem~\cite{Edmonds1973} is that arborescences to be packed are supposed to span the whole vertices.
Hence, if some node is not reachable from the root in an application, the theorem says nothing.
In such an instance, we are rather interesting in packing arborescences spanning the set of reachable vertices.
The following remarkable extension of the Edmonds theorem due to Kamiyama--Katoh--Takizawa~\cite{KKT2009} enables us to find such a packing even in the multi-root setting.
\begin{Thm}[Kamiyama--Katoh--Takizawa~\cite{KKT2009}]
\label{ThmReachability}
Let $D=(V, A)$ be a directed graph, and $r_1, \ldots , r_k\in V$.
Let $U_i\subseteq V$ $(i=1, \ldots , k)$ be the set of vertices reachable from $r_i$ in $D$.
Then, there exists a packing of $r_i$-arborescences $(i=1, \ldots , k)$ spanning $U_i$ in $D$ if and only if
\begin{equation}
\rho_A(X)\geq \left| \left\{ i\mid r_i\notin X,\ U_i\cap X\neq \emptyset \right\} \right| \label{EqKKT}
\end{equation}
holds for every $X\subseteq V$.
\end{Thm}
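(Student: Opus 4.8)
\medskip
\noindent\textbf{Proof plan.}
Necessity is immediate. Given a packing $\{T_i\}_{i=1}^k$ and a set $X\subseteq V$, put $I=\{i:r_i\notin X,\ U_i\cap X\neq\emptyset\}$; for each $i\in I$ pick $v\in U_i\cap X$, so the unique $r_i$--$v$ path of $T_i$ enters $X$ and therefore $T_i$ uses an arc entering $X$. These arcs are distinct over $i\in I$ by arc-disjointness, hence $\rho_A(X)\geq|I|$.

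For sufficiency I would work in the matroid framework underlying Edmonds' theorem. Set $\kappa(v):=|\{i:v\in U_i\setminus\{r_i\}\}|$ and $m:=\sum_i(|U_i|-1)=\sum_{v\in V}\kappa(v)$. Let $N_1$ be the partition matroid on the arc set $A$ whose independent sets have at most $\kappa(v)$ arcs entering each $v$, and let $N_2$ be the union of the cycle matroids of the underlying undirected graphs of $D[U_1],\dots,D[U_k]$; since a shortest $r_i$--$v$ path in $D$ uses only vertices of $U_i$, each $D[U_i]$ has connected underlying graph, so each of these cycle matroids has rank $|U_i|-1$ and both $N_1$ and $N_2$ have rank at most $m$. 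The key reformulation is that $D$ admits the required packing if and only if $N_1$ and $N_2$ have a common independent set of size $m$. One direction is routine: from a packing $\{T_i\}$ the arc set $\bigcup_i A(T_i)$ has exactly $\kappa(v)$ arcs entering each $v$ and decomposes into spanning trees of the $U_i$, so it is a common independent set of size $m$; moreover the matroid intersection min--max theorem turns the existence of such a set into the inequalities $r_{N_1}(A\setminus Z)+r_{N_2}(Z)\geq m$ for all $Z\subseteq A$, and after uncrossing (using the partition formula for $r_{N_2}$ and the fact that an optimal $Z$ may be taken of the form $\{a:\operatorname{head}(a)\in W\}$ for $W\subseteq V$) these are exactly \eqref{EqKKT}.

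The substantial part is the converse: a common independent set $B$ with $|B|=m$ must be converted into a genuine packing. Such a $B$ is a basis of $N_2$, hence splits into spanning trees $F_1,\dots,F_k$ of $U_1,\dots,U_k$, and it has exactly $\kappa(v)$ arcs entering each $v$; the task is to re-choose this decomposition so that, with the orientations the arcs carry in $D$, each $F_i$ has in-degree at most $1$ at every vertex of $U_i\setminus\{r_i\}$ --- a counting argument then forces $F_i$ to be an $r_i$-arborescence spanning $U_i$. I expect this to be the main obstacle: a single exchange need not work, since an over-used in-arc of $v$ in $F_i$ may have its tail outside $U_j$ for every ``deficient'' color $j$, so one must instead chain several fundamental-cycle exchanges across the trees, choosing the swapped edges so that the total in-degree excess $\sum_i\sum_v\max(0,\rho_{F_i}(v)-1)$ strictly decreases while every $F_i$ remains a spanning tree of its $U_i$. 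This is the step where \eqref{EqKKT} --- encoded through $N_1$-independence together with the reachability structure of the $U_i$ --- is really used.

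A matroid-free alternative, which may be cleaner to carry out, is a direct induction on $|A|$: if \eqref{EqKKT} survives the deletion of some arc without any $U_i$ changing, delete it and recurse; otherwise every arc is critical, and one shows by uncrossing the tight sets $\{X:\rho_A(X)=|\{i:r_i\notin X,\ U_i\cap X\neq\emptyset\}|\}$ that some arc $a=(u,v)$ may be committed as the in-arc of $v$ in a suitable $T_i$, after which the residual instance --- with $r_i$ effectively replaced by the current vertex set of $T_i$ --- still satisfies the corresponding deficiency condition. In either route the crux is the uncrossing lemma that produces such an arc.
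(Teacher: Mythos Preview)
The paper does not prove Theorem~\ref{ThmReachability}: it is quoted from~\cite{KKT2009} as a background result and used as a black box (in Corollary~\ref{CorReachabilityMixed} and in Step~4 of the algorithm). There is therefore no proof in this paper to compare your proposal against.

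On the proposal itself: your necessity argument is correct. For sufficiency, the matroid-intersection reformulation with $N_1$ the in-degree partition matroid and $N_2$ the union of the cycle matroids of the $D[U_i]$ is a natural idea, but --- as you explicitly flag --- the whole content of the theorem is hidden in the step you call ``the substantial part'': a common base of $N_1$ and $N_2$ only yields edge-disjoint spanning \emph{trees} of the $U_i$ whose union has the correct in-degree at every vertex, not $r_i$-arborescences, and the exchange procedure you sketch to repair the orientations (chaining fundamental-cycle swaps so that $\sum_i\sum_v\max(0,\rho_{F_i}(v)-1)$ strictly decreases) is asserted rather than proved. It is not clear that such a chain can always be found, nor where exactly the hypothesis~\eqref{EqKKT} enters beyond guaranteeing that the common base exists. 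Your second route, direct induction on $|A|$ with uncrossing of tight sets to locate a committable arc, is closer in spirit to the original argument in~\cite{KKT2009}, but again you only name the crucial uncrossing lemma without proving it. In short, you have correctly located the hard step in each of the two approaches, but in neither do you actually carry it out; what you have is an outline, not a proof.
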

In this paper we unifies Theorem~\ref{ThmMixed} and Theorem~\ref{ThmReachability} by looking at reachability mixed arborescences in mixed graphs.
Specifically we consider the following problem.

\medskip
\noindent \textbf{The Reachability Mixed Arborescence Packing Problem}.
Given a mixed graph $F=(V; E, A)$ and $r_1,\dots, r_k\in V$, find a packing of $r_i$-mixed arborescences $T_i$ spanning $U_i$, where $U_i$ denotes the set of vertices reachable from $r_i$ by a mixed path in $F$. (See Figure~\ref{FigReachabilityMixed} for an example.)

\medskip
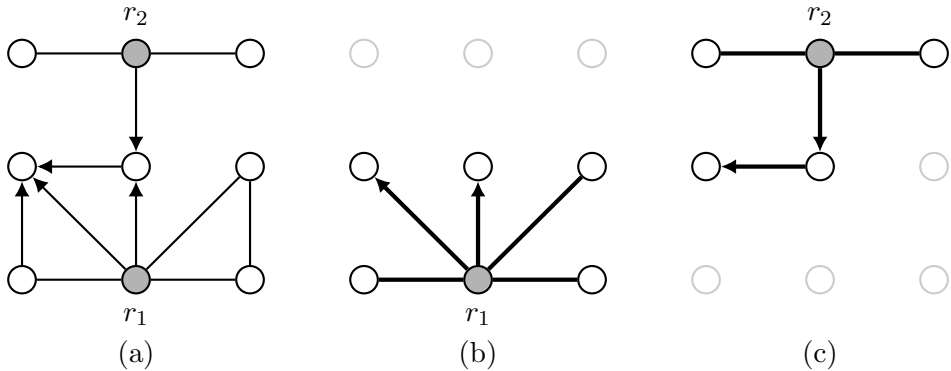
\begin{figure}[htb]
\centering
\begin{tikzpicture}
\tikzset{vertex/.style={fill = white, draw, circle, thick, minimum height=0.3cm}};
\tikzset{vertexN/.style={fill = white, draw = black!20, circle, thick, minimum height=0.3cm}};
\tikzset{root/.style={fill = black!30, draw, circle, thick, minimum height=0.3cm}};
\node[root] (r1a) at (-4.5,0) {};
\node[root] (r2a) at (-4.5,3) {};
\node[vertex] (v1a) at (-6,0) {};
\node[vertex] (v2a) at (-3,0) {};
\node[vertex] (v3a) at (-6,1.5) {};
\node[vertex] (v4a) at (-4.5,1.5) {};
\node[vertex] (v5a) at (-3,1.5) {};
\node[vertex] (v6a) at (-6,3) {};
\node[vertex] (v7a) at (-3,3) {};
\node [below= 0.05 of r1a] {$r_1$};
\node (a) at (-4.5,-1) {(a)};
\node [above= 0.05 of r2a] {$r_2$};
\foreach \u / \v in{r1a/v1a, r1a/v2a, r1a/v5a, r2a/v6a, r2a/v7a, v2a/v5a}
\draw[thick] (\u) -- (\v);
\foreach \u / \v in {r1a/v3a, r1a/v4a, r2a/v4a, v4a/v3a, v1a/v3a}
\draw[thick, ->] (\u) -- (\v);
\node[root] (r1b) at (0,0) {};
\node[vertexN] (r2b) at (0,3) {};
\node[vertex] (v1b) at (-1.5,0) {};
\node[vertex] (v2b) at (1.5,0) {};
\node[vertex] (v3b) at (-1.5,1.5) {};
\node[vertex] (v4b) at (0,1.5) {};
\node[vertex] (v5b) at (1.5,1.5) {};
\node[vertexN] (v6b) at (-1.5,3) {};
\node[vertexN] (v7b) at (1.5,3) {};
\node [below= 0.05 of r1b] {$r_1$};
\node (b) at (0,-1) {(b)};
\foreach \u / \v in{r1b/v1b, r1b/v2b, r1b/v5b}
\draw[ultra thick] (\u) -- (\v);
\foreach \u / \v in {r1b/v3b, r1b/v4b}
\draw[ultra thick, ->] (\u) -- (\v);
\node[vertexN] (r1c) at (4.5,0) {};
\node[root] (r2c) at (4.5,3) {};
\node[vertexN] (v1c) at (3,0) {};
\node[vertexN] (v2c) at (6,0) {};
\node[vertex] (v3c) at (3,1.5) {};
\node[vertex] (v4c) at (4.5,1.5) {};
\node[vertexN] (v5c) at (6,1.5) {};
\node[vertex] (v6c) at (3,3) {};
\node[vertex] (v7c) at (6,3) {};
\node (c) at (4.5,-1) {(c)};
\node [above= 0.05 of r2c] {$r_2$};
\foreach \u / \v in{r2c/v6c, r2c/v7c}
\draw[ultra thick] (\u) -- (\v);
\foreach \u / \v in {r2c/v4c, v4c/v3c}
\draw[ultra thick, ->] (\u) -- (\v);
\end{tikzpicture}
\caption{
(a) A mixed graph.
(b) An $r_1$-arborescence $T_1$ that spans $U_1$.
(c) An $r_2$-arborescence $T_2$ that spans $U_2$.
($T_1$ and $T_2$ are edge and arc-disjoint.)
}
\label{FigReachabilityMixed}
\end{figure}
This problem was recently introduced by Fortier--Kir\'aly--L\'eonard--Szigeti--Talon in~\cite{FKLST2018}.
They pointed out that a natural unification of the conditions in Theorem~\ref{ThmMixed} and Theorem~\ref{ThmReachability} is not sufficient, and developing a polynomial-time algorithm remains unsolved.
In this paper we give a solution to their question and develop the first polynomial-time algorithm.
\section{Packing in mixed graphs and covering supermodular functions}
Let $V$ be a finite set.
A family $\mathcal{H}$ of subsets of $V$ is called \emph{intersecting} if $X\cup Y, X\cap Y\in \mathcal{H}$ for every $X, Y\in \mathcal{H}$ with $X\cap Y\neq \emptyset$.
A function $f\colon 2^V\to \mathbb{R}$ is called \emph{supermodular} if
\begin{equation}
f(X)+f(Y)\leq f(X\cup Y)+f(X\cap Y) \label{EqSupermodular}
\end{equation}
holds for every $X, Y\subseteq V$.
For an intersecting family $\mathcal{H}$, a set function $f\colon \mathcal{H}\to \mathbb{R}$ is called \emph{intersecting supermodular} if~\eqref{EqSupermodular} holds for every $X, Y\in \mathcal{H}$ such that $X\cap Y\neq \emptyset$.

Frank's theorem (Theorem~\ref{ThmMixed}) can be  understood within the theory of \emph{supermodular function covering}.
For a set function $f\colon 2^V\to \mathbb{Z}$ on a finite set $V$, a directed graph $D=(V,A)$ is said to \emph{cover} $f$ if $\rho_A(X)\geq f(X)$ holds for every $X\subseteq V$.
The Edmonds arborescence packing theorem~\cite{Edmonds1973} states that a directed graph $D$ has a packing of $k$ spanning  $r$-arborescences if and only if $D$ covers the function $g_1$ defined by $g_1(X)=k$ if $\emptyset \neq X  \subseteq V\setminus \{r\}$ and $g_1(X)=0$ otherwise.
Using this formulation, a packing of spanning $r$-mixed arborescnces in a mixed graph $F=(V; E, A)$ can be considered as finding an orientation of the undirected graph $(V, E)$ such that the resulting directed graph covers $g_1-\rho_A$.
Frank~\cite{Frank1978} gave the following necessary and sufficient condition for an undirected graph to have an orientation covering an intersecting supermodular function.
\begin{Thm}[Frank~\cite{Frank1978}]
Let $G=(V, E)$ be an undirected graph, $\mathcal{H}\subseteq 2^V$ an intersecting family with $\emptyset \notin \mathcal{H}$ and  $V\in \mathcal{H}$, and $f\colon \mathcal{H}\to \mathbb{R}$ an intersecting supermodular function with $f(V)=0$.
There exists an orientation of $E$ that covers $f$ if and only if
\[
e_E(\mathcal{P})\geq \sum_{i=1}^tf(V_i)
\]
holds for every collection $\{ V_1, \ldots , V_t\}$ of mutually disjoint members of $\mathcal{H}$.
\label{ThmOrientation}
\end{Thm}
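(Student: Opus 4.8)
The plan is to prove necessity directly and sufficiency by induction on $|E|$. For necessity, suppose $E$ has an orientation covering $f$ and let $\{V_1,\dots,V_t\}$ be pairwise disjoint members of $\mathcal{H}$. Each $V_i$ then has at least $f(V_i)$ arcs entering it; since the $V_i$ are disjoint an arc enters at most one of them, and whenever an arc enters some $V_i$ the underlying edge has its two ends in different classes of the partition $\{V\setminus\bigcup_j V_j\}\cup\{V_1,\dots,V_t\}$, hence is one of the $e_E(\{V_1,\dots,V_t\})$ edges counted on the left. Therefore $\sum_i f(V_i)\le e_E(\{V_1,\dots,V_t\})$.

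For sufficiency I induct on $|E|$. If $E=\emptyset$, applying the hypothesis to each one-member family $\{X\}$ gives $f(X)\le e_E(\{X\})=0$, so the empty orientation covers $f$. If $E\neq\emptyset$, pick an arbitrary edge $e=\{u,v\}$; the claim is that at least one of its two orientations can be committed to. Fix a direction, say $u\to v$, and on $\mathcal{H}$ define $f'(X):=f(X)-\chi(X)$, where $\chi(X)=1$ if $v\in X$ and $u\notin X$, and $\chi(X)=0$ otherwise. As $\chi$ is the in-degree function of the single arc $u\to v$ it is submodular, so $f'$ is intersecting supermodular, and $f'(V)=f(V)=0$; moreover any orientation of $G-e$ covering $f'$, extended by the arc $u\to v$, covers $f$. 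Hence, by the induction hypothesis applied to $(G-e,f')$, it suffices to show that for at least one direction of $e$ the pair $(G-e,f')$ again satisfies the subpartition inequality.

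The heart of the argument is to rule out that both directions fail. If $u\to v$ fails, a violating subpartition $\mathcal{P}_1$ must --- upon comparison with the inequality $e_E(\mathcal{P}_1)\ge\sum_{W\in\mathcal{P}_1}f(W)$, using that deleting $e$ lowers $e_E$ by at most $1$ while $0\le\chi\le1$ --- be \emph{tight}, $e_E(\mathcal{P}_1)=\sum_{W\in\mathcal{P}_1}f(W)$, with $u$ in some member $V_a\in\mathcal{P}_1$ and $v\notin\bigcup\mathcal{P}_1$; symmetrically, failure of $v\to u$ yields a tight $\mathcal{P}_2$ with $v$ in some $V_c\in\mathcal{P}_2$ and $u\notin\bigcup\mathcal{P}_2$. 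In particular $e$ is counted in both $e_E(\mathcal{P}_1)$ and $e_E(\mathcal{P}_2)$. The plan is then to uncross $\mathcal{P}_1$ and $\mathcal{P}_2$ into a single subpartition $\mathcal{P}^*$ by members of $\mathcal{H}$ --- repeatedly replacing a crossing pair $W\in\mathcal{P}_1$, $W'\in\mathcal{P}_2$ by $W\cup W'$ and $W\cap W'$, which is legal since $\mathcal{H}$ is intersecting and can only raise $\sum f$ by intersecting supermodularity --- and to track the deficiency $\mathrm{def}(\mathcal{P}):=\sum_{W\in\mathcal{P}}f(W)-e_E(\mathcal{P})$: it is $0$ for both $\mathcal{P}_1$ and $\mathcal{P}_2$, and because $e$ contributed a crossing to each of $\mathcal{P}_1$ and $\mathcal{P}_2$ but can be arranged to cross $\mathcal{P}^*$ at most once, one obtains $\mathrm{def}(\mathcal{P}^*)\ge1$, contradicting the hypothesis $\mathrm{def}(\mathcal{P}^*)\le0$.

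I expect the genuine difficulty to be exactly this uncrossing: keeping the recombined family pairwise disjoint through the inevitable cascade, handling how the ``outside'' parts $V\setminus\bigcup\mathcal{P}_1$ and $V\setminus\bigcup\mathcal{P}_2$ interact, and carrying out the per-edge bookkeeping that shows $e_E$ is sufficiently submodular and that the edge $e$ is really saved. This is precisely the subtle point behind the failure, noted in~\cite{FKLST2018}, of a naive unification of the conditions of Theorems~\ref{ThmMixed} and~\ref{ThmReachability}. Alternatively, the theorem can be derived by casting the orientation problem as a submodular flow feasibility problem and invoking the Edmonds--Giles theorem, the subpartition condition then emerging from LP duality; I would nonetheless keep the self-contained inductive argument above as the main line.
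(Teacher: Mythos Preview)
The paper does not prove Theorem~\ref{ThmOrientation}; it quotes it from Frank~\cite{Frank1978} and uses it as a black box. So there is no in-paper proof to compare against, and your proposal has to be judged on its own.

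Necessity is fine. For sufficiency, the inductive strategy (commit to one orientation of a single edge and show the subpartition inequality persists for $(G-e,f')$) is a natural line, and your derivation that a failure of the direction $u\to v$ forces a \emph{tight} subpartition $\mathcal{P}_1$ with $u\in\bigcup\mathcal{P}_1$ and $v\notin\bigcup\mathcal{P}_1$ is correct. The gap is exactly where you say it is: the uncrossing of the two tight subpartitions $\mathcal{P}_1$ and $\mathcal{P}_2$ into a single disjoint family $\mathcal{P}^*$ with $\mathrm{def}(\mathcal{P}^*)\ge 1$ is not carried out, only described. This is not a routine step. The quantity $e_E(\mathcal{P})$ is not modular in the members of $\mathcal{P}$, so replacing a crossing pair $W,W'$ by $W\cup W'$ and $W\cap W'$ does not interact with $e_E$ in a simple way; moreover the cascade of uncrossings needed to make the combined family disjoint can create and destroy crossings of $e$ several times, and one has to argue globally, not pair by pair, that exactly one unit is saved. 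Frank's own proof (and the standard modern proofs) avoid this entirely: one either specifies a feasible in-degree vector via a polymatroid/intersection argument and then realizes it by an orientation, or one casts the whole problem as submodular-flow feasibility and reads off the subpartition condition from Edmonds--Giles duality. You mention the latter at the end; if you want a complete proof, that is the route to write out. The direct induction as sketched is not wrong in spirit, but as it stands it is a plan with its hardest lemma missing.

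One small remark: the sentence linking this uncrossing difficulty to the failure, noted in~\cite{FKLST2018}, of a naive unification of Theorems~\ref{ThmMixed} and~\ref{ThmReachability} is off target. That failure is about the right-hand side of~\eqref{EqKKT} not being intersecting supermodular as a \emph{set} function (which the paper fixes by passing to bi-sets), not about merging tight subpartitions in Frank's theorem.
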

Theorem~\ref{ThmMixed} follows as a corollary of Theorem~\ref{ThmOrientation} since $g_1-\rho_A$ is intersecting supermodular.
See e.g.,~\cite{FK2009} for a survey.

When applying the above framework to the reachability mixed arborescence packing problem there is a trouble: the set function defined by the right side of~\eqref{EqKKT} is not intersecting supermodular.
This difficulty can be overcome by introducing \emph{bi-sets} as first observed  by B\'erczi--Frank~\cite{BF2010}.
Recently  Kir\'aly--Szigeti--Tanigawa~\cite{KST2018} showed that the condition of Theorem~\ref{ThmReachability} is equivalent to covering an interesecting supermodular \emph{bi-set} function.
Thus the reachability mixed arborescence packing problem can be formulated as a problem of finding an orientation of an undirected graph that covers an intersecting supermodular \emph{bi-set} function.
The problem of finding an orientation covering an intersecting or crossing supermodular bi-set function had been recognized as a prominent open problem as it contains the (rooted) $k$-connected orientation problem as a special case, and in 2012 the problem was shown to be NP-hard by Durand de Gevigney~\cite{Durand de Gevigney2012}.
Although this negative result shows the difficulty for developing the counterpart theory for bi-set functions, still one may wonder if some special cases can be solved efficiently.
In the next section we show that the reachability mixed arborescences packing problem is a solvable example.
\section{Characterization and Algorithm}
In this section, we give a polynomial-time algorithm and a characterization for the reachability mixed arborescence packing problem.
In Section~\ref{SubsecReductionToSupmodBiSet}, we first reduce the reachability mixed arborescence packing problem to a problem of computing an orientation that covers an intersecting supermodular \emph{bi-set} function based on the result of~\cite{KST2018}.
In Section~\ref{SubsecReductionToSupmodSet}, we further reduce the problem to a problem of computing an orientation that covers a supermodular \emph{set} function.
Then the algorithm follows from Frank's result on supermodular covering orientation~\cite{Frank1978}.
In Section~\ref{SubsecCharacterization}, we give a characterization.
\subsection{Reduction to a problem of covering an intersecting supermodular bi-set function}
\label{SubsecReductionToSupmodBiSet}
Let $V$ be a finite set.
A bi-set $\sX=(\sX_O, \sX_I)$ is a pair of sets satisfying $\sX_I\subseteq \sX_O\subseteq V$.
Let $\mathcal{P}_2(V)$ denote the family of all bi-sets.
For a bi-set $\sX=(\sX_O, \sX_I)$, let
\[
\rho (\sX):=|\{ (u, v)\mid u\in V\setminus \sX_O, v\in \sX_I\} |.
\]
For $\sX=(\sX_O, \sX_I)$ and $\sY=(\sY_O, \sY_I)$, let
\begin{align*}
\sX \sqcup \sY&:=(\sX_O\cup \sY_O, \sX_I\cup \sY_I),\\
\sX \sqcap \sY&:=(\sX_O\cap \sY_O, \sX_I\cap \sY_I).
\end{align*}
A bi-set family $\mathcal{H}\subseteq \mathcal{P}_2(V)$ is called \emph{intersecting} if $\sX \sqcup \sY , \sX \sqcap \sY \in \mathcal{H}$ hold for every $\sX , \sY \in \mathcal{H}$ with $\sX_I\cap \sY_I \neq \emptyset$.
A bi-set function $f\colon \mathcal{H}\to \mathbb{R}$ is called \emph{intersecting supermodular} if
\[
f(\sX)+f(\sY)\leq f(\sX\sqcup \sY)+f(\sX\sqcap \sY)
\]
holds for every $\sX, \sY\in \mathcal{H}$ with $\sX_I\cap \sY_I\neq \emptyset$.

We introduce some key notation following B\'erczi--Kir\'aly--Kobayashi~\cite{BKK2016}.
For $u, v\in V$, let $u\sim v$ if $\{ i\mid u\in U_i\} =\{ i\mid v\in U_i\}$.
This relation $\sim$ becomes an equivalence relation.
Let equivalence classes for $\sim$ be denoted by $\Gamma_1, \ldots , \Gamma_l$, and we call each $\Gamma_j$ an \emph{atom}.
Let
\begin{align*}
\mathcal{F}_j&:=\{ \sX\in \mathcal{P}_2(V)\mid \emptyset \neq \sX_I\subseteq \Gamma_j, (\sX_O\setminus \sX_I)\cap \Gamma_j=\emptyset \} &(j=1, \ldots , l),\\
\mathcal{F}&:=\bigcup_{j=1}^l\mathcal{F}_j,\\
p(\sX)&:=|\{ r_i\mid \sX_I\subseteq U_i, r_i\notin \sX_I, (\sX_O\setminus \sX_I)\cap U_i=\emptyset \} |&(\forall \sX\in \mathcal{F}).
\end{align*}
It is known that $\mathcal{F}$ is an intersecting family and $p$ is an intersecting supermodular bi-set function~\cite{KST2018}.
Moreover, the following facts are known.
\begin{Lem}[Lemma~5.3 of \cite{KST2018}, adapted]
Let $D=(V, A)$ be a directed graph, and $r_1, \ldots , r_k\in V$.
For any $\tilde{A}\subseteq A$,
\[
\rho_{\tilde{A}}(X)\geq |\{ i\mid r_i\notin X, U_i\cap X\neq \emptyset \} |
\]
holds for every $X\subseteq V$ if and only if
\[
\rho_{\tilde{A}}(\sX)\geq p(\sX)
\]
holds for every $\sX\in \mathcal{F}$.
\label{LemBiSets}
\end{Lem}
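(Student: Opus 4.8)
The plan is to establish the two implications separately, exploiting one elementary property of the reachability structure: signatures are monotone along arcs, that is, if $(u,v)\in A$ then $\{i:u\in U_i\}\subseteq\{i:v\in U_i\}$, because any root that reaches $u$ also reaches $v$. Note that the sets $U_i$, the atoms $\Gamma_1,\dots,\Gamma_l$, the family $\mathcal{F}$ and the function $p$ all depend only on $(V,A,r_1,\dots,r_k)$, while $\tilde A$ enters both displayed inequalities only through $\rho_{\tilde A}$; this decoupling is the only point where the statement differs from Lemma~5.3 of~\cite{KST2018}, and the argument there goes through with $\rho_A$ replaced by $\rho_{\tilde A}$ everywhere.

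For the ``only if'' direction, assume $\rho_{\tilde A}(X)\ge|\{i:r_i\notin X,\ U_i\cap X\neq\emptyset\}|$ for all $X\subseteq V$, and suppose for contradiction that some $\sX\in\mathcal{F}_j$ satisfies $\rho_{\tilde A}(\sX)<p(\sX)$; since $p(\sX)\ge 1$, there is a root $r_i$ with $\Gamma_j\subseteq U_i$, $r_i\notin\sX_I$ and $U_i\cap(\sX_O\setminus\sX_I)=\emptyset$. I would then enlarge the outer set of $\sX$ until no $\tilde A$-arc enters its wall $W=\sX_O\setminus\sX_I$ from outside $\sX_O$: whenever an arc $(a,w)\in\tilde A$ with $a\notin\sX_O$ and $w\in W$ exists, replace $\sX$ by $(\sX_O\cup\{a\},\sX_I)$. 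The points to verify are (i) $a\notin\Gamma_j$, so the new bi-set again lies in $\mathcal{F}_j$ --- otherwise the root $r_i$ above would reach $a$, hence $w$, contradicting $U_i\cap W=\emptyset$; and (ii) the inequality $\rho_{\tilde A}<p$ is preserved, because $\rho_{\tilde A}(\sX)$ can only decrease (arcs out of $a$ into $\sX_I$ cease to be counted) while $p(\sX)$ is unchanged (a counted root with $a\in U_i$ would again reach $w$). The process terminates as $\sX_O$ strictly grows; writing $\sX^{*}$ for the final bi-set and $X:=\sX_O^{*}$, one has $\rho_{\tilde A}(X)=\rho_{\tilde A}(\sX^{*})$ (no $\tilde A$-arc enters $W^{*}$ from outside $X$), and every root counted by $p(\sX^{*})$ lies outside $X$ and reaches $\sX_I\subseteq X$; hence $|\{i:r_i\notin X,\ U_i\cap X\neq\emptyset\}|\ge p(\sX^{*})>\rho_{\tilde A}(X)$, a contradiction.

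For the ``if'' direction, assume $\rho_{\tilde A}(\sX)\ge p(\sX)$ for all $\sX\in\mathcal{F}$, fix $X\subseteq V$, and set $I_X=\{i:r_i\notin X,\ U_i\cap X\neq\emptyset\}$. The key device is the choice of witnesses: for each $i\in I_X$ pick $v_i\in U_i\cap X$ whose signature $\{i':v_i\in U_{i'}\}$ is inclusion-\emph{minimal} among the signatures of the vertices of $U_i\cap X$. Grouping $I_X$ according to the atom containing $v_i$ gives a partition $I_X=\bigcup_j B_j$, and for each $j$ with $B_j\neq\emptyset$ I would take $\sX^{j}:=\big((X\setminus Y_j)\cup(\Gamma_j\cap X),\ \Gamma_j\cap X\big)$ with $Y_j:=X\cap\bigcup_{i\in B_j}U_i$. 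A direct check (using $\Gamma_j\subseteq U_i$ for $i\in B_j$) shows $\sX^{j}\in\mathcal{F}_j$ and $p(\sX^{j})\ge|B_j|$, each $r_i$ with $i\in B_j$ being counted. The step that needs minimality is the claim that $\rho_{\tilde A}(\sX^{j})$ equals the number of $\tilde A$-arcs entering $\Gamma_j\cap X$ from $V\setminus X$: any further arc contributing to $\rho_{\tilde A}(\sX^{j})$ would emanate from a vertex $u\in X\cap\bigcup_{i\in B_j}U_i$ whose signature is, by arc-monotonicity, strictly contained in that of $\Gamma_j$ --- and then $u$ would be a strictly better witness than $v_i$ for some $i\in B_j$, a contradiction. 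Since the sets $\Gamma_j\cap X$ are pairwise disjoint, summing over the $j$ with $B_j\neq\emptyset$ yields $\rho_{\tilde A}(X)\ge\sum_j\rho_{\tilde A}(\sX^{j})\ge\sum_j p(\sX^{j})\ge\sum_j|B_j|=|I_X|$, as required.

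I expect the main obstacle to be exactly the discovery of these two moves: the wall-saturation step, which pushes a putative violating bi-set into a form that is literally a violating set; and the minimal-signature witnesses together with the bi-sets $\sX^{j}$, which are what make the $\rho_{\tilde A}$-values combine correctly across different atoms. Everything else reduces to bookkeeping with the definitions of $\mathcal{F}$ and $p$ and the monotonicity of signatures along arcs.
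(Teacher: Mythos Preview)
The paper does not give its own proof of this lemma: it is quoted (with the adaptation $\rho_A\rightsquigarrow\rho_{\tilde A}$) from \cite{KST2018}, so there is nothing in the present paper to compare your argument against. That said, your proof is correct. Both directions go through as written; the only places that deserve a second look are the following, and they all check out.

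In the ``only if'' direction, your invariant is that the set of roots counted by $p$ stays the same throughout the wall-saturation process. You state this in (ii), and it is indeed what makes the appeal to ``the root $r_i$ above'' legitimate at every subsequent step of (i): since $r_i$ is still counted, $U_i\cap W=\emptyset$ holds for the \emph{current} wall $W$, and $a\in\Gamma_j\subseteq U_i$ would force $w\in U_i\cap W$. At termination $\rho_{\tilde A}(\sX_O^{*})=\rho_{\tilde A}(\sX^{*})$ because no $\tilde A$-arc enters $W^{*}$ from outside, and each root counted by $p(\sX^{*})$ lies outside $\sX_O^{*}$ (it is not in $\sX_I$, and membership in $W^{*}$ would violate $U_i\cap W^{*}=\emptyset$).

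In the ``if'' direction, the computation $V\setminus\sX^{j}_O=(V\setminus X)\cup(Y_j\setminus\Gamma_j)$ shows that the only possible ``extra'' arcs contributing to $\rho_{\tilde A}(\sX^{j})$ start in $Y_j\setminus\Gamma_j$; your minimality argument rules these out, using that $\tilde A\subseteq A$ so arc-monotonicity of signatures applies. Since $\sX^{j}_O\subseteq X$ and the inner sets $\Gamma_j\cap X$ are pairwise disjoint, the contributing arcs for distinct $j$ are distinct arcs entering $X$, giving $\rho_{\tilde A}(X)\ge\sum_j\rho_{\tilde A}(\sX^{j})$ and completing the chain.
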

\begin{Lem}[Lemma~5.5 of \cite{KST2018}, adapted]
Let $D=(V, A)$ be a directed graph, $r_1, \ldots , r_k\in V, \tilde{A}\subseteq A$, and $\tilde{D}=(V, \tilde{A})\subseteq D$.
Let $\tilde{U}_i$ denote the set of vertices reachable from $r_i$ in $\tilde{D}$.
Then, if
\[
\rho_{\tilde{A}}(\sX)\geq p(\sX)
\]
holds for every $\sX\in \mathcal{F}$, then
\[
\{ i\mid v\in \tilde{U}_i\} =\{ i\mid v\in U_i\}
\]
holds for every $v\in V$.
\label{LemRoots}
\end{Lem}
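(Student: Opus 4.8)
The goal is to prove Lemma~\ref{LemRoots}: assuming $\rho_{\tilde A}(\sX)\ge p(\sX)$ for all $\sX\in\mathcal F$, the reachability profile of every vertex $v$ is the same in $\tilde D$ as in $D$. Since $\tilde D\subseteq D$, one inclusion is trivial: if $v\in\tilde U_i$ then $v\in U_i$, so $\{i\mid v\in\tilde U_i\}\subseteq\{i\mid v\in U_i\}$ for every $v$. The whole content is the reverse inclusion, i.e.\ showing that no reachability is lost when passing to $\tilde A$.

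The plan is to argue by contradiction. Suppose some vertex is reachable from some root in $D$ but not in $\tilde D$. Fix an index $i$ and let $Z$ be the set of vertices \emph{not} reachable from $r_i$ in $\tilde D$; by assumption $Z\cap U_i\ne\emptyset$, and of course $r_i\notin Z$. The natural candidate bi-set to plug into the hypothesis is built around a component of $Z$ that meets $U_i$. Concretely, pick $v\in Z\cap U_i$ and, walking back along a directed $r_i$--$v$ path in $D$, one finds an arc of $A$ that enters $Z$ (such an arc exists because $r_i\notin Z$ while $v\in Z$); more importantly I want to choose the inner set $\sX_I$ to be the set of vertices of $Z$ lying in one atom and in $U_i$, and the outer set $\sX_O$ to extend $\sX_I$ by the $\tilde D$-ancestors or by the rest of $Z$, in such a way that (a) $\sX\in\mathcal F_j$ for the appropriate atom $\Gamma_j$, i.e.\ $\emptyset\ne\sX_I\subseteq\Gamma_j$ and $(\sX_O\setminus\sX_I)\cap\Gamma_j=\emptyset$; (b) $r_i$ contributes to $p(\sX)$, so $p(\sX)\ge 1$; and (c) no arc of $\tilde A$ enters $\sX_I$ from outside $\sX_O$, so $\rho_{\tilde A}(\sX)=0$. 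Then $0=\rho_{\tilde A}(\sX)<1\le p(\sX)$ contradicts the hypothesis, and this contradiction shows $\{i\mid v\in U_i\}\subseteq\{i\mid v\in\tilde U_i\}$.

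The care needed is precisely in the simultaneous bookkeeping of these three requirements. For (c) I want $\sX_O$ to be ``closed under $\tilde D$-predecessors within $Z$'': if $w\in\sX_I$ and $(u,w)\in\tilde A$ then $u\in\sX_O$; taking $\sX_O$ to be the set of all vertices in $Z$ from which $\sX_I$ is reachable in $\tilde D$ achieves $\rho_{\tilde A}(\sX)=0$, but I must then double-check the atom condition (a), since $\sX_O\setminus\sX_I$ could in principle stray into $\Gamma_j$. The fix is to carve $\sX_I$ out of $Z$ as exactly the part of the relevant $\tilde D$-predecessor-closed set that lies in the atom $\Gamma_j$ containing $v$ — this is where the fact that the atoms partition $V$ according to the reachability profile in $D$ is used. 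For (b) I need $\sX_I\subseteq U_i$, $r_i\notin\sX_I$, and $(\sX_O\setminus\sX_I)\cap U_i=\emptyset$; the first two are immediate from the construction ($\sX_I\subseteq Z\cap U_i$ and $r_i\notin Z\supseteq\sX_I$), and the third again follows because every vertex of $\sX_O\setminus\sX_I$ lies outside the atom of $v$, hence has a strictly different reachability profile — but I must verify it is a profile that \emph{excludes} $i$, which should follow from choosing $\sX_I$ to consist of those $Z$-vertices whose $D$-reachability profile is exactly that of $v$ (so in particular contains $i$), together with an argument that vertices reachable-in-$\tilde D$-to-$\sX_I$ but with a different profile cannot be in $U_i$.

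The step I expect to be the main obstacle is reconciling (a) with (c): the most obvious choice of $\sX_O$ for killing $\rho_{\tilde A}(\sX)$ is a large predecessor-closed set, but largeness threatens the atom condition $(\sX_O\setminus\sX_I)\cap\Gamma_j=\emptyset$, whereas the most obvious choice for satisfying the atom condition is small, which threatens to admit an entering $\tilde A$-arc. Resolving this requires choosing $\sX_I$ and $\sX_O$ in coordinated fashion — roughly, let $R$ be the set of vertices of $Z$ from which one can reach, in $\tilde D$, a vertex of $Z\cap U_i\cap\Gamma_j$; set $\sX_I:=R\cap\Gamma_j$ (nonempty since $v\in R\cap\Gamma_j$) and $\sX_O:=R$ — and then verifying that every vertex of $R\setminus\Gamma_j$ indeed both avoids $\Gamma_j$ (clear) and avoids $U_i$ (the real work: it would need a profile containing $i$ yet differing from $v$'s, and one shows that being a $\tilde D$-predecessor inside $Z$ of a profile-exactly-$v$ vertex forces, via the trivial inclusion already established plus Lemma~\ref{LemBiSets} applied to the larger picture, a contradiction). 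Once that verification is in hand, the three properties line up and the contradiction $0<1$ closes the proof.
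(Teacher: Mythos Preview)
The paper does not give its own proof of this lemma; it is quoted from \cite{KST2018} without argument. So there is nothing to compare your approach to directly, and I will simply assess whether your outline stands on its own.

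Your overall strategy---assume some $v\in U_i\setminus\tilde U_i$, build a bi-set $\sX\in\mathcal F$ with $\rho_{\tilde A}(\sX)=0$ and $p(\sX)\ge 1$---is the right one, and your construction (take $R$ to be the $\tilde D$-predecessor closure inside $Z=V\setminus\tilde U_i$ of $Z\cap\Gamma_j$, then set $\sX_I=R\cap\Gamma_j$, $\sX_O=R$) correctly handles conditions (a) and (c). The gap is in (b), specifically the claim $(\sX_O\setminus\sX_I)\cap U_i=\emptyset$, which you flag as ``the real work'' and then hand-wave. In fact it can fail for an arbitrary choice of $v$. Take $V=\{r_1,r_2,u,w\}$ with arcs $r_1\to u$, $u\to w$, $r_2\to w$, and $\tilde A=\{u\to w\}$. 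Then $w\in U_1\setminus\tilde U_1$; if you start from $v=w$ (atom $\{w\}$, profile $\{1,2\}$) your set $R$ is $\{u,w\}$, and $u\in(\sX_O\setminus\sX_I)\cap U_1$, so $r_1$ does \emph{not} contribute to $p(\sX)$. The violating bi-set here is $(\{u\},\{u\})$, which your construction misses because you fixed the wrong starting atom.

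What is missing is a judicious choice of atom. If $u\in R\setminus\Gamma_j$, then $u$ reaches a vertex of $\Gamma_j$ in $D$, so the $D$-profile of $u$'s atom is strictly contained in that of $\Gamma_j$. Hence if you choose $\Gamma_j$ to have \emph{inclusion-minimal} $D$-profile among all atoms that lose reachability (i.e.\ atoms $\Gamma$ with $\Gamma\subseteq U_{i'}$ but $\Gamma\not\subseteq\tilde U_{i'}$ for some $i'$), and then pick $i$ witnessing this for $\Gamma_j$, the minimality forces every $u\in R\setminus\Gamma_j$ to lie in a ``good'' atom, hence $u\in U_i\Rightarrow u\in\tilde U_i\Rightarrow u\notin Z$, contradicting $u\in R\subseteq Z$. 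With that single extra sentence of care in the choice of $\Gamma_j$, your argument goes through; without it, it does not.
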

\begin{Cor}
Let $F=(V; E, A)$ be a mixed graph, and $r_1, \ldots , r_k\in V$.
There exists a feasible solution if and only if there exists an orientation of $E$ such that in the resulting directed graph $\vec{F}=(V, \vec{A})$,
\[
\rho_{\vec{A}}(\sX)\geq p(\sX)
\]
holds for every $\sX\in \mathcal{F}$.
\label{CorReachabilityMixed}
\end{Cor}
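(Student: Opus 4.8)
The plan is to show that a feasible packing in the mixed graph $F$ and an orientation of $E$ satisfying the displayed covering inequality are two descriptions of the same object, and then to derive the equivalence purely by assembling Theorem~\ref{ThmReachability} with Lemmas~\ref{LemBiSets} and~\ref{LemRoots}. In particular, since Lemma~\ref{LemBiSets} already equates the bi-set condition ``$\rho_{\vec{A}}(\sX)\ge p(\sX)$ for all $\sX\in\mathcal{F}$'' with the set condition of~\eqref{EqKKT}, the real content is the passage between $F$ and its orientations, for which one must check that the reachability sets $U_i$ are preserved.

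For the ``only if'' direction, suppose $T_1,\dots ,T_k$ form a feasible solution. Each $T_i$ is a mixed $r_i$-arborescence, so orienting every undirected edge of $T_i$ along the unique mixed path from $r_i$ turns $T_i$ into a directed $r_i$-arborescence. Since the $T_i$ are pairwise edge- and arc-disjoint, these partial orientations are mutually compatible; extend them arbitrarily to a full orientation $\vec{F}=(V,\vec{A})$ of $E$. Then each $T_i$ is a directed $r_i$-arborescence in $\vec{F}$, and it spans $U_i$. Because $\vec{F}$ is an orientation of $F$, every directed path in $\vec{F}$ is a mixed path in $F$, so the set of vertices reachable from $r_i$ in $\vec{F}$ is contained in $U_i$; and it contains $U_i$ because $T_i\subseteq\vec{F}$ spans $U_i$. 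Hence the reachability sets in $\vec{F}$ coincide with $U_i$, and Theorem~\ref{ThmReachability} applied to $\vec{F}$ gives $\rho_{\vec{A}}(X)\ge |\{ i\mid r_i\notin X,\ U_i\cap X\neq\emptyset\}|$ for every $X\subseteq V$. Lemma~\ref{LemBiSets}, with $\tilde{A}=\vec{A}$, then yields $\rho_{\vec{A}}(\sX)\ge p(\sX)$ for every $\sX\in\mathcal{F}$, as required.

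For the ``if'' direction, let $\vec{F}=(V,\vec{A})$ be an orientation of $E$ with $\rho_{\vec{A}}(\sX)\ge p(\sX)$ for all $\sX\in\mathcal{F}$. Regard $\vec{F}$ as a subgraph of the directed graph $D$ obtained from $F$ by replacing each undirected edge with a pair of oppositely directed arcs; the set of vertices reachable from $r_i$ in $D$ is exactly $U_i$, so $\mathcal{F}$ and $p$ are unchanged whether built from $F$ or from $D$, and $\vec{F}\subseteq D$. Applying Lemma~\ref{LemRoots} with $\tilde{A}=\vec{A}$, the hypothesis forces the set of vertices reachable from $r_i$ in $\vec{F}$ to be exactly $U_i$. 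Now Lemma~\ref{LemBiSets} applied to $\vec{F}$ (whose reachability sets are $U_i$) converts the bi-set condition back into $\rho_{\vec{A}}(X)\ge |\{ i\mid r_i\notin X,\ U_i\cap X\neq\emptyset\}|$ for all $X\subseteq V$, so Theorem~\ref{ThmReachability} produces a packing of directed $r_i$-arborescences in $\vec{F}$ spanning $U_i$. Reading each such arborescence back inside $F$ (each oriented edge of $\vec{F}$ becoming the corresponding undirected edge of $E$) turns it into a mixed $r_i$-arborescence spanning $U_i$, and the resulting arborescences remain edge- and arc-disjoint because $\vec{F}$ is an orientation of $F$. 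This is a feasible solution.

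The argument is a routine assembly of the quoted results; the one point requiring care, and the only place where anything is actually proved, is the bookkeeping that the reachability sets $U_i$ survive the translation between $F$ and $\vec{F}$ — in the ``only if'' direction this is immediate from the fact that the $T_i$ span $U_i$, while in the ``if'' direction it is exactly what Lemma~\ref{LemRoots} supplies. I would present the two directions compactly and flag this preservation of $U_i$ explicitly, since it is what makes the set/bi-set conditions for $F$ and for $\vec{F}$ interchangeable.
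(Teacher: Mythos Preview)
Your proof is correct and follows essentially the same route as the paper: orient the arborescences to pass to $\vec{F}$, invoke Theorem~\ref{ThmReachability} and Lemma~\ref{LemBiSets} for necessity, and use Lemma~\ref{LemRoots} followed by Lemma~\ref{LemBiSets} and Theorem~\ref{ThmReachability} for sufficiency. You are in fact slightly more careful than the paper in the necessity direction, where you explicitly verify that the reachability sets in $\vec{F}$ coincide with the $U_i$ before applying Theorem~\ref{ThmReachability}; the paper leaves this implicit.
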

\begin{proof}
First, we show necessity.
Suppose there exists a reachability mixed arborescence packing.
Then, each mixed arborescence with a fixed root $r_i$ can be uniquely converted to an ordinary arborescence with root $r_i$ by orientating undirected edges in $E$.
Let the resulting graph of this orientation be denoted by $\vec{F}=(V, \vec{A})$.
Then by Theorem~\ref{ThmReachability},
\[
\rho_{\vec{A}}(X)\geq |\{ i\mid r_i\notin X, U_i\cap X\neq \emptyset \} |
\]
holds for every $X\subseteq V$.
Hence by Lemma~\ref{LemBiSets},
\[
\rho_{\vec{A}}(\sX)\geq p(\sX)
\]
holds for every $\sX\in \mathcal{F}$.

Next, we show sufficiency.
Let $\vec{F}=(V, \vec{A})$ be a mixed graph obtained by orientating $F$ such that
\[
\rho_{\vec{A}}(\sX)\geq p(\sX)
\]
holds for every $\sX\in \mathcal{F}$.
Let $\tilde{U}_i$ denote the set of vertices reachable from $r_i$ in $\vec{F}$.
Then by Lemma~\ref{LemRoots},
\begin{equation}
\{ i\mid v\in \tilde{U}_i\} =\{ i\mid v\in U_i\} \label{EqSameReachability}
\end{equation}
holds for every $v\in V$.
By~\eqref{EqSameReachability} and Lemma~\ref{LemBiSets},
\begin{align*}
\rho_{\vec{A}}(X)&\geq |\{ i\mid r_i\notin X, \tilde{U}_i\cap X\neq \emptyset \} |\\
&=|\{ i\mid r_i\notin X, U_i\cap X\neq \emptyset \} |
\end{align*}
holds for every $X\subseteq V$.
Therefore, by Theorem~\ref{ThmReachability}, there is a packing of $r_i$-reachability arborescences in $\vec{F}$.
The corresponding packing in the original mixed graph $F$ is a reachability mixed arborescence packing by Lemma~\ref{LemRoots}.
\end{proof}
We also remark the following easy observation.
\begin{Lem}
For $\sX, \sY\in \mathcal{F}$ with $\sX_I=\sY_I$ and $\sX_O\subseteq \sY_O$, $p(\sX)\geq p(\sY)$ holds.
\label{LemBiSetOuterSet}
\end{Lem}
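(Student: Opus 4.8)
The plan is to read off the claim directly from the definition of $p$, noticing that the hypotheses $\sX_I=\sY_I$ and $\sX_O\subseteq\sY_O$ make the membership condition defining $p(\sY)$ logically stronger than the one defining $p(\sX)$. So I would first abbreviate $S:=\sX_I=\sY_I$ and record the elementary set inclusion $\sX_O\setminus S\subseteq\sY_O\setminus S$, which follows immediately from $\sX_O\subseteq\sY_O$.

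Next I would compare, for an arbitrary root $r_i$, the two defining predicates. For $p(\sX)$ the conditions are $S\subseteq U_i$, $r_i\notin S$, and $(\sX_O\setminus S)\cap U_i=\emptyset$; for $p(\sY)$ they are $S\subseteq U_i$, $r_i\notin S$, and $(\sY_O\setminus S)\cap U_i=\emptyset$. The first two clauses are literally the same because $\sX_I=\sY_I=S$, and the third clause for $\sY$ implies the third clause for $\sX$ since $\sX_O\setminus S\subseteq\sY_O\setminus S$. Hence $\{r_i\mid r_i\text{ counted in }p(\sY)\}\subseteq\{r_i\mid r_i\text{ counted in }p(\sX)\}$, and taking cardinalities gives $p(\sX)\geq p(\sY)$.

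I do not anticipate any real obstacle: the statement is a one-line monotonicity property of $p$, with the equality of interior sets guaranteeing that the ``root'' clauses of the defining condition coincide and the inclusion of outer sets guaranteeing that the ``emptiness'' clause is monotone. I would present it without case analysis in two or three lines.
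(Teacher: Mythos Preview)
Your proposal is correct and is essentially identical to the paper's own proof: the paper simply writes the chain $p(\sX)\geq p(\sY)$ by unfolding the definition of $p$ and noting, exactly as you do, that $\sX_I=\sY_I$ and $\sX_O\subseteq\sY_O$ make the set of roots counted for $\sY$ a subset of those counted for $\sX$.
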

\begin{proof}
\begin{align*}
p(\sX)&=|\{ r_i\mid \sX_I\subseteq U_i, r_i\notin \sX_I, (\sX_O\setminus \sX_I)\cap U_i=\emptyset \} |\\
&\geq |\{ r_i\mid \sY_I\subseteq U_i, r_i\notin \sY_I, (\sY_O\setminus \sY_I)\cap U_i=\emptyset \} |&(\mbox{by }\sX_I=\sY_I\mbox{ and }\sX_O\subseteq \sY_O)\\
&=p(\sY).
\end{align*}
\end{proof}
\subsection{Reduction to a problem of covering an intersecting supermodular function}
\label{SubsecReductionToSupmodSet}
Let $F=(V; E, A)$ be a mixed graph.
For $X\subseteq V$, let $E[X]$ (resp. $A[X]$) be the set of edges (resp. arcs) whose both endpoints are in $X$.
Also, let $\partial_A^-(X)$ be the set of arcs in $A$ entering $X$.
For a directed graph $D=(V, A)$, let $\rho_D=\rho_A$.

Let $\Gamma_1, \ldots , \Gamma_l$ be the set of atoms in $F$ with respect to given roots $r_1, \ldots , r_k$.
For each atom $\Gamma_j$, we prepare a new mixed graph $F_j:=(V_j; E_j, A_j)$ obtained from $(\Gamma_j; E[\Gamma_j ], A[\Gamma_j ])$ by adding a new vertex $t_a$ for each $a\in \partial_A^-(\Gamma_j )$ and a new directed arc $t_av$ for each $a=uv\in \partial_A^-(\Gamma_j )$.
In other words,
\begin{align*}
V_j&:=\Gamma_j\cup \{ t_a\mid a\in \partial_A^-(\Gamma_j)\} ,\\
E_j&:=E[\Gamma_j] ,\\
A_j&:=A[\Gamma_j]\cup \{ t_av\mid a=uv\in \partial_A^-(\Gamma_j)\}
\end{align*}
for $j=1, \ldots , l$.
An important observation is that there exists no undirected edge connecting distinct atoms, and hence $\{ E_1, \ldots , E_l\}$ is a partition of $E$.

We say that $X\subseteq V_j$ is \emph{consistent} if for each $t_a\in X$, the head of $a\in A$ is in $X$.
For each $j$, we introduce a set family $\mathcal{H}_j\subseteq 2^{V_j}$, a bi-set $\sB(X)\in \mathcal{P}_2(V)$ for each $X\in \mathcal{H}_j$, and a set function $p_j\colon \mathcal{H}_j\to \mathbb{Z}$ as follows:
\begin{align}
\mathcal{H}_j&:=\{ X\subseteq V_j\mid X\cap \Gamma_j\neq \emptyset , X\mbox{ is consistent}\} , \notag \\
\sB(X)&:=((X\cap \Gamma_j)\cup \{ u\mid t_{uv}\in X\} , X\cap \Gamma_j)&(X\in \mathcal{H}_j), \notag \\
p_j(X)&:=p(\sB(X))&(X\in \mathcal{H}_j). \label{EqpjX}
\end{align}
For $\mathcal{H}_j$ and $p_j$, the following lemma holds.
\begin{Lem}
$\mathcal{H}_j$ is an intersecting family and $p_j$ is an intersecting supermodular function.
\label{LemHjpj}
\end{Lem}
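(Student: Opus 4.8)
The plan is to deduce both claims from the already-established facts that $\mathcal{F}$ is an intersecting family and $p$ is an intersecting supermodular bi-set function, by analysing the map $X\mapsto\sB(X)$ from $\mathcal{H}_j$ to $\mathcal{F}$. First I would check that $\mathcal{H}_j$ is intersecting: given $X,Y\in\mathcal{H}_j$ with $X\cap Y\neq\emptyset$, the one subtle point is that $(X\cap Y)\cap\Gamma_j\neq\emptyset$. Take $w\in X\cap Y$; if $w\in\Gamma_j$ this is immediate, and if $w=t_a$ with $a=uv\in\partial_A^-(\Gamma_j)$ then consistency of $X$ and of $Y$ forces the head $v$ of $a$ into $X\cap Y$, and $v\in\Gamma_j$ since $a$ enters $\Gamma_j$. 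Consistency of $X\cup Y$ and of $X\cap Y$ is immediate from consistency of $X$ and $Y$, so both lie in $\mathcal{H}_j$.

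Next I would record two properties of $\sB$. For every $X\in\mathcal{H}_j$ we have $\sB(X)\in\mathcal{F}_j$: the inner set $X\cap\Gamma_j$ is a nonempty subset of $\Gamma_j$, and $\sB(X)_O\setminus\sB(X)_I\subseteq\{u\mid t_{uv}\in X\}\subseteq V\setminus\Gamma_j$, since these $u$ are tails of arcs entering $\Gamma_j$. A coordinatewise computation then gives $\sB(X\cup Y)=\sB(X)\sqcup\sB(Y)$ exactly, while for the intersection one obtains only $\sB(X\cap Y)_I=(\sB(X)\sqcap\sB(Y))_I$ together with $\sB(X\cap Y)_O\subseteq(\sB(X)\sqcap\sB(Y))_O$; the latter inclusion can be strict, because $\{u\mid t_{uv}\in X\cap Y\}$ may be a proper subset of $\{u\mid t_{uv}\in X\}\cap\{u\mid t_{uv}\in Y\}$ when two distinct arcs entering $\Gamma_j$ share a common tail.

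With these in hand, the intersecting supermodularity of $p_j$ follows. For $X,Y\in\mathcal{H}_j$ with $X\cap Y\neq\emptyset$, Step 1 gives $\sB(X)_I\cap\sB(Y)_I=(X\cap Y)\cap\Gamma_j\neq\emptyset$, so $\sB(X)\sqcup\sB(Y),\ \sB(X)\sqcap\sB(Y)\in\mathcal{F}$ because $\mathcal{F}$ is intersecting, and the intersecting supermodularity of $p$ yields
\[
p_j(X)+p_j(Y)=p(\sB(X))+p(\sB(Y))\le p(\sB(X)\sqcup\sB(Y))+p(\sB(X)\sqcap\sB(Y)).
\]
Since $\sB(X)\sqcup\sB(Y)=\sB(X\cup Y)$, and since $\sB(X\cap Y)$ and $\sB(X)\sqcap\sB(Y)$ are bi-sets in $\mathcal{F}$ with the same inner set and $\sB(X\cap Y)_O\subseteq(\sB(X)\sqcap\sB(Y))_O$, Lemma~\ref{LemBiSetOuterSet} gives $p(\sB(X)\sqcap\sB(Y))\le p(\sB(X\cap Y))$. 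Hence the right-hand side is at most $p(\sB(X\cup Y))+p(\sB(X\cap Y))=p_j(X\cup Y)+p_j(X\cap Y)$, which is exactly the required inequality (note $X\cup Y, X\cap Y\in\mathcal{H}_j$ by Step 1, so $p_j$ is defined on them).

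The main obstacle, such as it is, lies in the intersection step: $\sB$ does not commute with $\cap$ on the outer component, so one must check both that this discrepancy pushes $p$ in the favourable direction — which is precisely what Lemma~\ref{LemBiSetOuterSet} is designed to absorb — and that every bi-set appearing in the argument ($\sB(X)$, $\sB(Y)$, $\sB(X\cup Y)$, $\sB(X\cap Y)$, as well as $\sB(X)\sqcup\sB(Y)$ and $\sB(X)\sqcap\sB(Y)$) genuinely lies in $\mathcal{F}$, so that $p$ and Lemma~\ref{LemBiSetOuterSet} apply. The remaining verifications are routine unwindings of the definitions of $\mathcal{H}_j$, of consistency, and of $\sB$.
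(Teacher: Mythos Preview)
Your proof is correct and follows essentially the same approach as the paper: show $\mathcal{H}_j$ is intersecting via consistency (using that any $t_a\in X\cap Y$ forces its head into $X\cap Y\cap\Gamma_j$), then deduce intersecting supermodularity of $p_j$ from that of $p$ by combining the exact identity $\sB(X\cup Y)=\sB(X)\sqcup\sB(Y)$ with the one-sided inclusion $\sB(X\cap Y)_O\subseteq(\sB(X)\sqcap\sB(Y))_O$ together with Lemma~\ref{LemBiSetOuterSet}. If anything, you are slightly more explicit than the paper in verifying that all the bi-sets involved lie in $\mathcal{F}$ so that $p$ and Lemma~\ref{LemBiSetOuterSet} legitimately apply.
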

\begin{proof}
Fix arbitrary $X, Y\in \mathcal{H}$ with $X\cap Y\neq \emptyset$.
For each $t_a\in X\cap Y$, since $X$ and $Y$ are consistent, the head of $a$ is in $X\cap Y$.
Hence $X\cap Y$ is consistent.
This also means that $X\cap Y\neq \emptyset$ implies $X\cap Y\cap \Gamma_j\neq \emptyset$.
Thus $X\cap Y\in \mathcal{H}_j$.
We can prove $X\cup Y\in \mathcal{H}_j$ by the same argument.

Next, we show that $p_j$ is intersecting supermodular.
For $\sB(X)$ and $\sB(Y)$ with $X, Y\in \mathcal{H}_j$, we have
\begin{align*}
\sB(X)\sqcap \sB(Y)&=((X\cap Y\cap \Gamma_j)\cup (\{ u\mid t_{uv}\in X\} \cap \{ u\mid t_{uv}\in Y\} ), X\cap Y\cap \Gamma_j)\mbox{ and}\\
\sB(X\cap Y)&=((X\cap Y\cap \Gamma_j)\cup \{ u\mid t_{uv}\in X\cap Y\} , X\cap Y\cap \Gamma_j)
\end{align*}
by definition.
Hence, by Lemma~\ref{LemBiSetOuterSet},
\begin{equation}
p(\sB(X\cap Y))\geq p(\sB(X)\sqcap \sB(Y)) \label{EqBiSetIntersection}
\end{equation}
holds.
For $X\cup Y$, one can check $\sB(X\cup Y)=\sB(X)\sqcup \sB(Y)$, implying
\begin{equation}
p(\sB(X\cup Y))=p(\sB(X)\sqcup \sB(Y)). \label{EqBiSetUnion}
\end{equation}
Also, for $X, Y\in \mathcal{H}_j$ with $X\cap Y\neq \emptyset$, $X\cap Y\cap \Gamma_j\neq \emptyset$ implies $(\sB(X))_I\cap (\sB(Y))_I\neq \emptyset$.
Hence by the intersecting supermodularity, we have
\begin{equation}
p(\sB(X))+p(\sB(Y))\leq p(\sB(X)\sqcup \sB(Y))+p(\sB(X)\sqcap \sB(Y)). \label{EqIntSupB}
\end{equation}
Putting all together, we obtain
\begin{align*}
&p_j(X)+p_j(Y)\\
={}&p(\sB(X))+p(\sB(Y))&(\mbox{by }\eqref{EqpjX})\\
\leq {}&p(\sB(X)\sqcup \sB(Y))+p(\sB(X)\sqcap \sB(Y))&(\mbox{by }\eqref{EqIntSupB})\\
\leq {}&p(\sB(X\cup Y))+p(\sB(X\cap Y))&(\mbox{by }\eqref{EqBiSetIntersection}\mbox{ and }\eqref{EqBiSetUnion})\\
={}&p_j(X\cup Y)+p_j(X\cap Y)&(\mbox{by }\eqref{EqpjX}).
\end{align*}
\end{proof}

Since $\{ E_1, \ldots , E_l\}$ is a partition of $E$, an orientation of $E$ gives not only an orientation of $F$ but also an orientation of $F_j$.
\begin{Lem}
Let $F=(V; E, A)$ be a mixed graph.
Let $\vec{F}$ and $\vec{F}_j$ be directed graphs obtained from $F$ and $F_j$ by orientating $E$, respectively.
Then,
\begin{equation}
\rho_{\vec{F}}(\sX)\geq p(\sX) \label{EqBiSetGeq}
\end{equation}
holds for every $\sX\in \mathcal{F}$ if and only if
\begin{equation}
\rho_{\vec{F}_j}(X)\geq p_j(X) \label{EqSetGeq}
\end{equation}
holds for every $j\in \{ 1, \ldots , l\}$ and for every $X\in \mathcal{H}_j$.
\label{LemAtoms}
\end{Lem}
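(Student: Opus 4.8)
The plan is to exhibit, for each bi-set $\sX\in\mathcal F$, a corresponding set in the appropriate family $\mathcal H_j$, and conversely, so that the demands $p(\sX),\,p_j(X)$ and the in-degrees $\rho_{\vec F}(\sX),\,\rho_{\vec F_j}(X)$ match up under the correspondence. I would use repeatedly the observation already recorded — that $E$ has no edge joining distinct atoms — so that a single orientation of $E$ orients $F$ and every $F_j$ at once; its consequence is that an arc of $F$ whose head lies in $\Gamma_j$ and whose tail lies outside $\Gamma_j$ must be a genuine directed arc of $\partial_A^-(\Gamma_j)$.

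For the ``if'' direction, take $\sX\in\mathcal F$ and fix $j$ with $\sX\in\mathcal F_j$, so that $\emptyset\neq\sX_I\subseteq\Gamma_j$ and $(\sX_O\setminus\sX_I)\cap\Gamma_j=\emptyset$. I would set
\[
X:=\sX_I\cup\{\,t_a\mid a=uv\in\partial_A^-(\Gamma_j),\ v\in\sX_I,\ u\in\sX_O\,\}.
\]
It is immediate that $X\in\mathcal H_j$: indeed $X\cap\Gamma_j=\sX_I\neq\emptyset$, and $X$ is consistent because the head $v$ of every retained $t_{uv}$ lies in $\sX_I\subseteq X$. From the definition of $\sB$ one reads off $(\sB(X))_I=\sX_I$ and $(\sB(X))_O=\sX_I\cup\{\,u\mid uv\in\partial_A^-(\Gamma_j),\ v\in\sX_I,\ u\in\sX_O\,\}\subseteq\sX_O$, so Lemma~\ref{LemBiSetOuterSet} gives $p_j(X)=p(\sB(X))\geq p(\sX)$. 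Finally I would check $\rho_{\vec F_j}(X)=\rho_{\vec F}(\sX)$. Since each vertex $t_a$ has in-degree $0$ in $F_j$, the arcs entering $X$ in $\vec F_j$ are exactly the arcs inside $\Gamma_j$ with head in $\sX_I$ and tail in $\Gamma_j\setminus X$, together with the arcs $(t_{uv},v)$ with $v\in\sX_I$ and $t_{uv}\notin X$; for a tail $u\in\Gamma_j\setminus X=\Gamma_j\setminus\sX_I$ the atom property yields $u\notin\sX_O$, and by construction $t_{uv}\notin X$ with $v\in\sX_I$ means precisely $u\notin\sX_O$, so in both cases the displayed condition says exactly that the corresponding arc of $A$ (or oriented edge of $E$) enters $\sX$ in $\vec F$. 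Hence $\rho_{\vec F}(\sX)=\rho_{\vec F_j}(X)\geq p_j(X)\geq p(\sX)$.

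For the ``only if'' direction I would run the correspondence backwards. Given $j$ and $X\in\mathcal H_j$, note $\sB(X)\in\mathcal F_j\subseteq\mathcal F$: its inner set $X\cap\Gamma_j$ is a nonempty subset of $\Gamma_j$, and the extra outer vertices are tails of arcs in $\partial_A^-(\Gamma_j)$, hence outside $\Gamma_j$. Thus \eqref{EqBiSetGeq} gives $\rho_{\vec F}(\sB(X))\geq p(\sB(X))=p_j(X)$, and it remains to prove $\rho_{\vec F_j}(X)\geq\rho_{\vec F}(\sB(X))$. I would do this by an injection from the arcs counted by $\rho_{\vec F}(\sB(X))$ to those counted by $\rho_{\vec F_j}(X)$: an arc $(w,v)$ with both ends in $\Gamma_j$ and $w\notin(\sB(X))_O$ has $w\in\Gamma_j\setminus X$, so it enters $X$ in $\vec F_j$; an arc $uv\in A$ with $u\notin\Gamma_j$ and $u\notin(\sB(X))_O$ forces $t_{uv}\notin X$, so $(t_{uv},v)$ enters $X$ in $\vec F_j$; distinct arcs are sent to distinct arcs. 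This map need not be onto — which is exactly the point where $F_j$ splits arcs of $A$ sharing a tail outside $\Gamma_j$ into arcs with distinct tails $t_a$ — but only the inequality is needed, giving $\rho_{\vec F_j}(X)\geq\rho_{\vec F}(\sB(X))\geq p_j(X)$.

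The one step to be careful with is the in-degree bookkeeping: making sure that in $\vec F_j$ no arc enters a consistent set through a vertex $t_a$ (these vertices have in-degree $0$), and that the consistency of $X$ together with the atom property $(\sX_O\setminus\sX_I)\cap\Gamma_j=\emptyset$ faithfully transfers the blocking role of $\sX_O\setminus\sX_I$ in the bi-set in-degree into membership in $X$. The mild asymmetry between the two directions — an equality of in-degrees when passing from $\mathcal F$ to $\mathcal H_j$, only an inequality in the other direction — is exactly the effect of $F_j$ separating parallel-tailed arcs, which raises in-degrees but cannot hurt the covering condition.
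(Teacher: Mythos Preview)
Your argument is correct and follows the same strategy as the paper: in each direction you build the same auxiliary object (the set $X=\sX_I\cup\{t_a:\text{tail in }\sX_O\}$ from a bi-set $\sX$, and the bi-set $\sB(X)$ from a set $X\in\mathcal H_j$), compare in-degrees, and invoke Lemma~\ref{LemBiSetOuterSet} for the demand side. The only cosmetic difference is that in the ``if'' direction you establish the equality $\rho_{\vec F}(\sX)=\rho_{\vec F_j}(X)$, while the paper is content with the inequality $\rho_{\vec F}(\sX)\geq\rho_{\vec F_j}(X)$; your sharper claim is indeed correct and your bijection argument for it is sound.
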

\begin{proof}
For necessity, suppose~\eqref{EqBiSetGeq} holds for every $\sX\in \mathcal{F}$.
Take an arbitrary $j$ and $X\in \mathcal{H}_j$.
Let $\sX=\sB(X)$.
We show
\begin{equation}
\rho_{\vec{F}_j}(X)\geq \rho_{\vec{F}}(\sX). \label{EqSetGeqVec}
\end{equation}
To see~\eqref{EqSetGeqVec}, suppose $a=uv$ enters $\sX$.
Then $u\notin \sX_O$ and $v\in \sX_I$.
If $u\in \Gamma_j$, then since $X\cap \Gamma_j\subseteq \sX_O$ and $u\notin \sX_O$, $u\in \Gamma_j\setminus X$.
Thus, by $v\in X\cap \Gamma_j\subseteq X$, $a=uv$ enters $X$ as well.
If $u\notin \Gamma_j$, then, $t_{uv}\notin X$ and hence $t_{uv}v$ enters $X$.
This shows~\eqref{EqSetGeqVec} since $t_{a_1}$ and $t_{a_2}$ are distinct if $a_1$ and $a_2$ are distinct arcs entering $\sX$.
Therefore, by~\eqref{EqSetGeqVec}, \eqref{EqBiSetGeq}, and \eqref{EqpjX}, we obtain
\[
\rho_{\vec{F}_j}(X)\geq \rho_{\vec{F}}(\sX)\geq p(\sX)=p_j(X).
\]

For sufficiency, suppose~\eqref{EqSetGeq} holds for every $X\in \mathcal{H}_j$ for every $j\in \{ 1, \ldots , l\}$.
Take an arbitrary $\sX\in \mathcal{F}$.
Let
\[
X=\sX_I\cup \{ t_{v_1v_2}\mid v_1\in \sX_O\setminus \sX_I, v_2\in \sX_I\} .
\]
We show
\begin{equation}
\rho_{\vec{F}}(\sX)\geq \rho_{\vec{F}_j}(X). \label{EqBiSetGeqVec}
\end{equation}
To see~\eqref{EqBiSetGeqVec}, suppose $a'=u'v'$ enters $X$.
Then since there exists no arc entering $t_a$ for any $a\in \partial_A^-(\Gamma_j)$, we have $v'\in \sX_I$.
If $u'\in \Gamma_j$, then $u'\notin \sX_O\setminus \sX_I$ and thus $u'\notin \sX_O$.
Hence $u'v'$ enters $\sX$ as well.
If $u'\notin \Gamma_j$, then $u'\in \{ t_a\mid a\in \partial_A^-(\Gamma_j)\} \setminus \{ t_{v_1v_2}\mid v_1\in \sX_O\setminus \sX_I, v_2\in \sX_I\}$.
Let $uv'\in A$ be an arc such that $u'=t_{uv'}$.
Then $u\notin \sX_O$ and hence $uv'$ enters $\sX$.
This shows~\eqref{EqBiSetGeqVec} since $u_1$ with $u_1'=t_{u_1v'}$ and $u_2$ with $u_2'=t_{u_2v}$ are distinct if $u_1'v'$ and $u_2'v'$ are distinct arcs entering $X$.

Since $(\sB(X))_I=\sX_I$ and $(\sB(X))_O\subseteq \sX_O$, Lemma~\ref{LemBiSetOuterSet} implies
\begin{equation}
p(\sB(X))\geq p(\sX) \label{EqpBXpX}.
\end{equation}
Therefore, by~\eqref{EqBiSetGeqVec}, \eqref{EqSetGeq}, \eqref{EqpjX} and~\eqref{EqpBXpX},
\[
\rho_{\vec{F}}(\sX)\geq \rho_{\vec{F}_j}(X)\geq p_j(X)=p(\sB(X))\geq p(\sX)
\]
holds.
\end{proof}
By Lemma~\ref{LemAtoms}, the problem of covering a bi-set function $p$ can be reduced to the problem of covering the set function $p_j$ in $F_j$ for each $j$.
Since $\{ E_1, \ldots , E_l\}$ is a partition of $E$, we can solve each problem independently.
Thus, the following algorithm correctly solves the reachability mixed arborescence packing problem.

\medskip
\noindent \textbf{Algorithm} \textsc{Reachability Mixed Arborescence Packing}

\begin{description}
\item[Step 1] Construct $F_j$ $(j=1, \ldots , l)$ from the input mixed graph $F$.
\item[Step 2] Compute an orientation of $E_j$ that covers $p_j$ for each $F_j$.
If there exists no such orientation for some $j$, then there exists no reachability mixed arborescence packing and algorithm quits.
\item[Step 3] Construct a directed graph $\vec{F}$ obtained from $F$ by orientating $E$ using the orientation computed in Step~2.
\item[Step 4] Solve the reachability arborescence packing problem for $\vec{F}$.
\end{description}
Frank~\cite{Frank1978} showed that problem of covering an intersecting supermodular function by orientating edges can be solved in polynomial time.
Hence, Step~2 can be done in polynomial time.
Step~4 can be done in polynomial time due to~\cite{KKT2009}.
Therefore, the algorithm correctly solves the reachability mixed arborescence packing problem in polynomial time.
\subsection{Characterization}
\label{SubsecCharacterization}
By the identical argument to the proof of Lemma~\ref{LemAtoms}, we can show the following lemma.
\begin{Lem}
Let $F=(V; E, A)$ be a mixed graph, and $r_1, \ldots , r_k\in V$.
The following two statements are equivalent.

(i)
\[
e_E(\mathcal{P})+\sum_{i=1}^t\rho(\sX^i)\geq \sum_{i=1}^tp(\sX^i) \label{EqReachabilityMixed}
\]
holds for every family of bi-sets $\{ \sX^1, \ldots , \sX^t\}$ such that $\mathcal{P}=\{ (\sX^1)_I, \ldots , (\sX^t)_I\}$ is a subpartition of an atom $\Gamma_j$ and that $((\sX^i)_O\setminus (\sX^i)_I)\cap \Gamma_j=\emptyset$ holds for $i=1, \ldots , t$.

(ii)
\begin{equation}
e_{E_j}(\mathcal{P}')+\sum_{i=1}^t\rho(V^i)\geq \sum_{i=1}^tp_j(V^i) \label{EqLemEquiv}
\end{equation}
holds for every $j\in \{ 1, \ldots , l\}$ and for every subpartition $\mathcal{P}'=\{ V^1, \ldots , V^t\}$ of $V_j$.
\label{LemPP'Equiv}
\end{Lem}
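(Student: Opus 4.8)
The plan is to mirror the proof of Lemma~\ref{LemAtoms} but at the level of subpartitions, using the correspondence $X \mapsto \sB(X)$ between consistent sets in $\mathcal{H}_j$ and bi-sets in $\mathcal{F}_j$. The essential point is that the two combinatorial quantities appearing on each side transform compatibly: on the covering side we already know $p_j(X) = p(\sB(X))$ by~\eqref{EqpjX}, and on the arc side we reuse the inequalities $\rho_{\vec{F}_j}(X) \geq \rho_{\vec{F}}(\sX)$ and its reverse from the proof of Lemma~\ref{LemAtoms}, but now applied to the \emph{uncovered} arc set rather than an orientation. So first I would set up, for each direction, the passage between a bi-set family $\{\sX^1,\dots,\sX^t\}$ with the stated structure (the inner sets forming a subpartition of a single atom $\Gamma_j$, and $(\sX^i)_O\setminus(\sX^i)_I$ disjoint from $\Gamma_j$) and a subpartition $\mathcal{P}'=\{V^1,\dots,V^t\}$ of $V_j$, via exactly the formulas $V^i = \sX^i_I \cup \{t_{v_1v_2} : v_1\in \sX^i_O\setminus \sX^i_I,\ v_2 \in \sX^i_I\}$ in one direction and $\sX^i = \sB(V^i)$ in the other.

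The second step is to check that $\mathcal{P}$ being a subpartition of $\Gamma_j$ corresponds to $\mathcal{P}'$ being a subpartition of $V_j$: disjointness of the inner sets $\sX^i_I$ is equivalent to disjointness of the $V^i$, because the extra ``tail'' vertices $t_{uv}$ attached to a given $V^i$ are determined by which arcs enter $\sX^i_I$, and these can always be chosen so that distinct $V^i$ remain disjoint (indeed in the forward construction each $t_a$ appears in at most one $V^i$ since the $\sX^i_I$ are disjoint and $a$ has a unique head). The third step is the count of undirected edges: since $E_j = E[\Gamma_j]$ and no undirected edge of $F$ joins distinct atoms, an undirected edge counted by $e_E(\mathcal{P})$ (connecting distinct $\sX^i_I \subseteq \Gamma_j$) is precisely an undirected edge of $F_j$ connecting distinct $V^i$, so $e_E(\mathcal{P}) = e_{E_j}(\mathcal{P}')$; the hypothesis $(\sX^i_O \setminus \sX^i_I)\cap \Gamma_j = \emptyset$ is what guarantees $\sX^i_O \cap \Gamma_j = \sX^i_I$, so no edge inside $\Gamma_j$ gets miscounted.

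The fourth and most delicate step is the arc term, and here I would simply invoke, verbatim, the two inequalities established inside the proof of Lemma~\ref{LemAtoms}: for the bi-set $\sX^i$ and the associated $V^i$ one has $\rho_A(V^i) \le \rho(\sX^i)$ going one way (every arc of $F$ entering $V^i$ gives rise to an arc entering $\sX^i$, injectively) and $\rho(\sX^i) \le \rho_A(V^i)$ going the other way (every arc entering $\sX^i$ becomes an arc entering $V^i$ in $F_j$, injectively, either directly or through its $t_a$-copy). Note that these two inequalities are proved there for the \emph{full} arc sets $A$ and $A_j$ — not for an orientation — so they apply unchanged, and in fact for the matched pair $(\sX^i, V^i)$ built by the explicit formulas above one gets equality $\rho(\sX^i) = \rho_A(V^i)$. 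Combining this with $p_j(V^i) = p(\sB(V^i))$ and, when we start from an arbitrary bi-set family rather than the canonical one, with Lemma~\ref{LemBiSetOuterSet} (to absorb the gap between $\sB(V^i)$ and $\sX^i$ in the direction $(i)\Rightarrow(ii)$ exactly as~\eqref{EqpBXpX} was used), we can pass an inequality in~\eqref{EqReachabilityMixed} to one in~\eqref{EqLemEquiv} and back. The main obstacle I anticipate is purely bookkeeping: making sure the subpartition structure is genuinely preserved in both directions — in particular that in the $(ii)\Rightarrow(i)$ direction the bi-sets $\sB(V^i)$ do have their inner sets inside a single atom and satisfy the disjointness from $\Gamma_j$ of their outer-minus-inner parts — and confirming the tail-vertex injectivity so that summing the per-index (in)equalities over $i=1,\dots,t$ is legitimate; no new idea beyond the proof of Lemma~\ref{LemAtoms} should be needed, which is exactly why the paper says the lemma follows ``by the identical argument.''
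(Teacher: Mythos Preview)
Your plan is correct and is exactly what the paper intends: it offers no separate proof, only the remark that the argument is identical to that of Lemma~\ref{LemAtoms}, and your outline faithfully carries that argument over to subpartitions via the correspondences $V^i\leftrightarrow\sX^i$ together with $e_E(\mathcal{P})=e_{E_j}(\mathcal{P}')$ (using that no undirected edge crosses atoms). Two small slips to fix in the write-up: the $\rho$-inequalities in the proof of Lemma~\ref{LemAtoms} are stated for the oriented graphs $\vec F,\vec F_j$, not for $A,A_j$, so you should say the \emph{same argument} applies to $A,A_j$ rather than that it is already proved there; and Lemma~\ref{LemBiSetOuterSet} is needed in the direction (ii)$\Rightarrow$(i) (to pass from $p(\sB(V^i))$ down to $p(\sX^i)$), not in (i)$\Rightarrow$(ii), where $p(\sB(V^i))=p_j(V^i)$ holds with equality by definition.
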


We give the following characterization for the existence of a reachability mixed arborescence packing:
\begin{Thm}
Let $F=(V; E, A)$ be a mixed graph, and $r_1, \ldots , r_k$.
Let $U_i\subseteq V\ (i=1, \ldots , k)$ be the set of vertices reachable from $r_i$ in $F$.
Then, there exists a packing of $r_i$-mixed arborescences $(i=1, \ldots , k)$ spanning $U_i$ in $F$ if and only if
\begin{equation}
e_E(\mathcal{P})+\sum_{i=1}^t\rho_A(\sX^i)\geq \sum_{i=1}^t|\{ r_i\mid (\sX^i)_I\subseteq U_i, r_i\notin (\sX^i)_I, ((\sX^i)_O\setminus (\sX^i)_I)\cap U_i=\emptyset \} | \label{EqReachabilityMixed}
\end{equation}
holds for every family of bi-sets $\{ \sX^1, \ldots , \sX^t\}$ such that $\mathcal{P}=\{ (\sX^1)_I, \ldots , (\sX^t)_I\}$ is a subpartition of an atom $\Gamma_j$ and that $((\sX^i)_O\setminus (\sX^i)_I)\cap \Gamma_j=\emptyset$ holds for $i=1, \ldots , t$.
\label{ThmReachabilityMixed}
\end{Thm}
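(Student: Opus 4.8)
The plan is to assemble the three reductions established above and then invoke Frank's orientation theorem (Theorem~\ref{ThmOrientation}) atom by atom. First, by Corollary~\ref{CorReachabilityMixed}, a reachability mixed arborescence packing in $F$ exists if and only if $E$ admits an orientation whose resulting directed graph $\vec{F}=(V,\vec{A})$ satisfies $\rho_{\vec{A}}(\sX)\geq p(\sX)$ for every $\sX\in\mathcal{F}$. Since $\{E_1,\dots,E_l\}$ partitions $E$, such an orientation of $E$ is nothing but a choice of orientation of $E_j$ for each $j$, and by Lemma~\ref{LemAtoms} the above holds if and only if, for every $j$, the orientation of $F_j$ induced by that of $E_j$ satisfies $\rho_{\vec{F}_j}(X)\geq p_j(X)$ for every $X\in\mathcal{H}_j$. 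Writing $\rho_{\vec{F}_j}(X)=\rho_{A_j}(X)+\rho_{\vec{E}_j}(X)$, this says exactly that the undirected graph $(V_j,E_j)$ has an orientation covering the set function $f_j:=p_j-\rho_{A_j}$ on $\mathcal{H}_j$, for every $j\in\{1,\dots,l\}$.

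Next I would apply Theorem~\ref{ThmOrientation} to each $F_j$, so I must check its hypotheses for $\mathcal{H}_j\subseteq 2^{V_j}$ and $f_j$. The family $\mathcal{H}_j$ is intersecting by Lemma~\ref{LemHjpj}, satisfies $\emptyset\notin\mathcal{H}_j$ and $V_j\in\mathcal{H}_j$ directly from its definition, and $\rho_{A_j}$ is a fully submodular function on $2^{V_j}$ (being the in-degree function of the fixed directed graph $(V_j,A_j)$), so $-\rho_{A_j}$ is supermodular and hence $f_j=p_j-\rho_{A_j}$ is intersecting supermodular on $\mathcal{H}_j$ by Lemma~\ref{LemHjpj}. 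It remains to see $f_j(V_j)=0$: no arc of $A_j$ enters $V_j$, so $\rho_{A_j}(V_j)=0$; and $p_j(V_j)=p(\sB(V_j))=0$ because $(\sB(V_j))_I=\Gamma_j$ and, since $\Gamma_j$ is an atom, any $r_i\notin\Gamma_j$ with $\Gamma_j\subseteq U_i$ reaches $\Gamma_j$ only through an arc of $A$ entering $\Gamma_j$ (there are no undirected edges between distinct atoms), whose tail then lies in $(\sB(V_j))_O\setminus(\sB(V_j))_I$ and in $U_i$, so no root is counted. Theorem~\ref{ThmOrientation} now gives: $E_j$ admits an orientation covering $f_j$ if and only if $e_{E_j}(\mathcal{P}')\geq\sum_{i=1}^t f_j(V^i)$ over all collections $\mathcal{P}'=\{V^1,\dots,V^t\}$ of pairwise disjoint members of $\mathcal{H}_j$, which rearranges to $e_{E_j}(\mathcal{P}')+\sum_{i=1}^t\rho_{A_j}(V^i)\geq\sum_{i=1}^t p_j(V^i)$ — that is, condition~(ii) of Lemma~\ref{LemPP'Equiv}, i.e.\ inequality~\eqref{EqLemEquiv}, after checking (routinely) that it is enough to range over subpartitions of $V_j$ whose parts are consistent and meet $\Gamma_j$.

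Finally, quantifying over all $j$ and invoking Lemma~\ref{LemPP'Equiv}, validity of~(ii) for every $j$ is equivalent to validity of its condition~(i); and spelling out $p$ via its definition, condition~(i) is precisely the displayed inequality~\eqref{EqReachabilityMixed} of the theorem, with $\rho_A(\sX^i)$ being the arc-into-bi-set quantity matched to $\rho_{A_j}(V^i)$ exactly as in the proof of Lemma~\ref{LemAtoms}. Concatenating the equivalences established in the previous two paragraphs with this one yields the asserted characterization.

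The genuinely non-bookkeeping points are concentrated in the middle step: verifying $f_j(V_j)=0$ — which is where the structural fact that undirected edges never join distinct atoms is essential — and reconciling Frank's ``disjoint members of $\mathcal{H}_j$'' formulation with the ``subpartition of $V_j$'' formulation used in Lemma~\ref{LemPP'Equiv}, i.e.\ showing that restricting attention to consistent parts meeting $\Gamma_j$ loses no generality. Everything else is a direct assembly of Corollary~\ref{CorReachabilityMixed}, Lemmas~\ref{LemHjpj}, \ref{LemAtoms}, \ref{LemPP'Equiv}, and Theorem~\ref{ThmOrientation}.
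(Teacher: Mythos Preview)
Your proposal is correct and follows essentially the same route as the paper's proof: Corollary~\ref{CorReachabilityMixed} $\Rightarrow$ Lemma~\ref{LemAtoms} $\Rightarrow$ Theorem~\ref{ThmOrientation} applied to $p_j-\rho_{A_j}$ on each $\mathcal{H}_j$ $\Rightarrow$ Lemma~\ref{LemPP'Equiv}. In fact you supply more detail than the paper does, giving an explicit argument for $p_j(V_j)=0$ (the paper merely asserts it) and flagging the harmless discrepancy between ``disjoint members of $\mathcal{H}_j$'' in Theorem~\ref{ThmOrientation} and ``subpartition of $V_j$'' in Lemma~\ref{LemPP'Equiv}.
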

\begin{proof}
By Corollary~\ref{CorReachabilityMixed}, there exists a feasible packing if and only if there exists an orientation of $E$ such that in the resulting directed graph $\vec{F}=(V, \vec{A})$,
\[
\rho_{\vec{F}}(\sX)\geq p(\sX)
\]
holds for every $\sX\in \mathcal{F}$.
Due to Lemma~\ref{LemAtoms}, the above condition is equivalent to the existence of an orientation such that
\[
\rho_{\vec{F}_j}(X)\geq p_j(X)
\]
holds for every $j\in \{ 1, \ldots , l\}$ and for every $X\in \mathcal{H}_j$, or equivalently,
\[
\rho_{\vec{E}_j}(X)\geq p_j(X)-\rho_{A_j}(X)
\]
holds for every $j\in \{ 1, \ldots , l\}$ and for every $X\in \mathcal{H}_j$.

By Lemma~\ref{LemHjpj}, $\mathcal{H}_j$ is an intersecting family such that $\emptyset \notin \mathcal{H}_j$, $V_j\in \mathcal{H}_j$ and $p_j-\rho_{A_j}$ is an intersecting supermodular function such that $p_j(V_j)-\rho_{A_j}(V_j)=0$.
Thus, by Theorem~\ref{ThmOrientation}, there exists an orientation of $E_j$ that covers $p_j-\rho_{A_j}$ if and only if
\[
e_{E_j}(\mathcal{P}')\geq \sum_{i=1}^t(p_j(V^i)-\rho_{A_j}(V^i))
\]
holds for every subpartition $\{ V^1, \ldots , V^t\}$ of $V_j$.
Finally, by Lemma~\ref{LemPP'Equiv} this is equivalent to that~\eqref{EqReachabilityMixed} holds for every family of bi-sets $\{ \sX^1, \ldots , \sX^t\}$ such that $\mathcal{P}=\{ (\sX^1)_I, \ldots , (\sX^t)_I\}$ is a subpartition of an atom $\Gamma_j$ and that $((\sX^i)_O\setminus (\sX^i)_I)\cap \Gamma_j=\emptyset$ holds for $i=1, \ldots , t$.
\end{proof}
\section{Concluding Remarks}
We say that a mixed subgraph \emph{covers} a set function $f$ if it can be converted to a directed graph covering $f$ by orientating the undirected edges.
Khanna--Naor--Shepherd~\cite{KNS2005} obtained the weighted version of Frank's result~\cite{Frank1978} by showing that the problem of computing a minimum weight mixed subgraph covering an (intersecting/crossing) supermodular function can be reduced to the submodular flow problem.
Hence by using the reduction in Section~\ref{SubsecReductionToSupmodSet}, one can also solve the minimum weighted version of the reachability mixed arborescence packing problem.

Our result heavily relies on recent results on the reachability arborescence packing problem by~\cite{KST2018}, and these results were established in a more general setting by allowing additional matroid constraints (after a preprocessing shown in~\cite[Section~5.1]{KST2018}).
The result in this paper can be established in this general setting since the counterpart of Lemma~\ref{LemBiSetOuterSet} remains true.
\section*{Acknowledgement}
The first author is supported by JSPS Research Fellowship for Young Scientists.
The research of the first author was supported by Grant-in-Aid for JSPS Research Fellow Grant Number 16J06879.
The research of the second author was supported by JST CREST Grant Number JPMJCR14D2.

\end{document}